\title{Heuristic Methods for Security Protocols\thanks{The authors thank the anonymous reviewers for their valuable suggestions and Alberto Lluch-Lafuente for the useful discussions.}}
\author{Qurat ul Ain Nizamani
  \institute{Department of Computer Science \\
    University of Leicester, UK}
  \email{qn4@mcs.le.ac.uk}
  \and
  Emilio Tuosto
  \institute{Department of Computer Science \\
    University of Leicester, UK}
  \email{et52@mcs.le.ac.uk}
}
\newcommand{\pnf}{\mathit{pnf}}
\newcommand{\pl}{$\mathcal{PL}$}
\newcommand{\conf}[1]{\langle {#1} \rangle}
\newcommand{\enc}[2]{\{#1\}_{#2}}
\newcommand{\sep}{\;\;\mid\;\;}
\newcommand{\comment}[1]{}
\newcommand{\tool}{$\mathcal{A}${\sc SPAS}{\tt y}{\sc A}}
\begin{document}
\maketitle
\begin{abstract}
  Model checking is an automatic verification technique to verify
  hardware and software systems. However it suffers from state-space
  explosion problem. In this paper we address this problem in the context
  of cryptographic protocols by proposing a \emph{security
    property-dependent} heuristic. The heuristic weights the state
  space by exploiting the security formulae; the weights may then be
  used to explore the state space when searching for attacks.
\end{abstract}
\section{Introduction}\label{intro:sec}
Security protocols present many interesting challenges from both
pragmatic and theoretical points of view as they are ubiquitous and
pose many theoretical challenges despite their apparent simplicity.
One of the most interesting aspects of security protocols is the
complexity of the verification algorithms to check their correctness;
in fact, under many models of the intruder, correctness is undecidable
and/or computationally hard~\cite{dlms99,ckmrt03,ckmrt03-b,shm04}.

Many authors have formalised security protocols in terms of process
calculi suitable to define many verification frameworks (besides model
checking, path analysis, static analysis,
etc.)~\cite{low96,ag99,bod01,bor01}.
Model checking (MC) techniques have been exploited in the design and
implementation of automated tools~\cite{cjm98,mms97} and
\emph{symbolic} techniques have been proposed to tackle the state
explosion problem~\cite{alv03,bb05,bbn04,bft06}.

This paper promotes the use of directed MC in security protocols.
We define a heuristic based on the logic formulae formalising the
security properties of interest and we show how such heuristic may
drive the search of an attack path.
Specifically, we represent the behaviour of a security protocol in the
context of the (symbolic) MC framework based on the cIP and \pl, respectively
a cryptographic process calculus and a logic for specifying security
properties introduced in~\cite{bft06}.
An original aspect of this framework is that it allows to explicitly
represent instances\footnote{ This key feature of the framework is
supported by the use of \emph{open variables}, a linguistic
mechanism to tune and combine instances that join the run of a
protocol. } of participants and predicate over them.

Intuitively, the heuristic ranks the nodes and the edges of the state
space by inspecting (the syntactical structure of the) formula
expressing the security property of interest.
More precisely, the state space consists of the transition system
representing possible runs of a protocol; our heuristic weights states
and transitions considering the instances of principals that joined
the context and how they are quantified in the security formula.
Weights are designed so that most promising paths are tried before
other less promising directions.
Rather interestingly, the heuristic can rule out a portion of the
state space by exploring only a part of it.
In fact, we also show that the heuristic may possibly cut some
directions as they cannot lead to attacks; in fact, the heuristic is
proved to be correct, namely no attacks can be found in the portion
of the state space cut by our heuristic.

At the best of our knowledge, heuristic methods have not yet been
explored to analyse security protocols (at least in the terms proposed
in this paper) which may be surprising.
In general, many of the features of heuristics fit rather well with the
verification of security protocols.
More precisely, $(i)$ optimality of the solution (``the attack'') is
not required when validating protocols (violations of properties of
interest are typically considered equally harmful), $(ii)$ the
graph-like structures (e.g., labeled transition systems)
representing the behaviour of security protocols usually have
'symmetric regions' which may be ignored once one of them is
checked, $(iii)$ heuristic search may be easily combined with
several verification frameworks and particularly with MC.
The lack of such research thread is possibly due to the fact that it
is in general hard to define heuristics for security protocols.
In fact, typically heuristics are tailored on (properties of) a
\emph{goal state} and measure the ``distance'' from a state to a
goal state.
In the case of the verification of security protocols, this would boil
down to measure the distance from an attack where a security property
is violated.
Therefore, designing heuristics suitable to improve MC of security
protocols is hard as attacks cannot be characterised beforehand.

Here, we argue that heuristic search may be uniformly used in the
verification of security protocols and provide some interesting cases
of how our approach improves efficiency.
In fact, we will illustrate how the use of the heuristic can greatly
improve the efficiency of the search by cutting the directions that
cannot contain attacks.

Our approach seems to be rather promising, albeit this research is in
an initial stage, it can be extended in many directions.
Finally, we argue that our proposal can be applied to other
verification frameworks like~\cite{bor01} or inductive proof methods
like~\cite{pau97,thg99} (see \S~\ref{rw}).

\noindent
\textbf{Structure of the paper.}  \S~\ref{bg} summarizes the concepts
necessary to understand our work; \S~\ref{ph} yields the definition of
our heuristic which is then evaluated and proved correct in
\S~\ref{ha:sec}; \S~\ref{rw} concludes the paper and discusses related work.

\section{Background}\label{bg}

\newtheorem{thm}{Theorem}[section]
\newtheorem{defn}{Definition}
\newtheorem{lemm}[thm]{Lemma}

This section fixes our notation (\S~\ref{cip:sec}) and a few basic concepts
on informed search largely borrowed from~\cite{k8} (\S~\ref{h:sec})

\subsection{Expressing security protocols and properties in cIP and \pl}\label{cip:sec}
We adopt the formal framework introduced in~\cite{bft06} consisting of
the cIP (after \emph{cryptographic Interaction Pattern}) process
calculus and the \pl\ logic (after \emph{protocol logic}) to respectively
represent security protocols and properties.
Here, we only review the main ingredients of cIP and \pl\ by means of
the Needham-Shroeder (NS) public key protocol and refer the reader
to~\cite{bft06} for a precise presentation.

The NS protocol consists of the following steps
\[\begin{array}{lll}
1. & A \to B: & \enc{na, A}{B^+} \\
2. & B \to A: & \enc{na,nb}{A^+} \\
3. & A \to B: & \enc{nb}{B^+}
\end{array}\]
where, in step $1$ the initiator $A$ sends to $B$ a nonce $na$ and her
identity encrypted with $B$'s public key $B^+$; in step 2, $B$
responds to the nonce challenge by sending to $A$ a fresh nonce $nb$
and $na$ encrypted with $A^+$, the public key of $A$; $A$ concludes
the protocol by sending back to $B$ the nonce $nb$ encrypted with
$B$'s public key.

In cIP principals consist of their identity, the list of \emph{open
  variables} and the actions they have to perform in the protocol.
A cIP principal can either send or receive messages from a public channel
using the $out$ and $in$ actions respectively.
The NS protocol can be formalized in cIP as follows:
\begin{equation}\label{ns:eq}
  \begin{array}{c@{\hspace{2cm}}c}
    \begin{array}{ll}
      A: (r)[
        & out(\{na, A\}_{r^+}).\\
        & in(\{na,?z\}_{A^-}). \\
        & out(\{z\}_{r^+})
        \ \ \ ]
    \end{array}
    &
    \begin{array}{ll}
      B: ()[
        & in(\{?x,?y\}_{B^-}).\\
        & out(\{x,nb\}_{y^+}).\\
        & in(\{nb\}_{B^-})
        \ \ \ ]
    \end{array}
  \end{array}
\end{equation}
The principal $A$ (resp. $B$) in~(\ref{ns:eq}) represents the
initiator (resp. the responder) of the NS protocol.
The open variable $r$ is meant to be bound to the identity of the
responder.
The principal $A$ first executes the output action and then waits
for a message which should match the pattern specified in the $in$
action.
More precisely, $A$ will receive any pair encrypted with her public
key whose first component is the nonce $na$; upon a successful match,
the second component of the pair will be assigned to the variable $z$.
For instance, the $\enc{na,M}{A^+}$ matches $\enc{na,?z}{A^-}$ for any
$M$ and would assign $M$ to $z$.

\bigskip

We adopt the definition of \pl\ formulae given in~\cite{bft06}:
\[\begin{array}{lcl}
  \phi,\psi & ::= &
  x_i = m \sep
  \kappa \rhd m \sep
  Qi:A. \psi \sep
  \lnot \psi \sep
  \psi \land \phi \sep
  \psi \lor \phi
\end{array}\]
where $Q$ ranges over the set of quantifiers $\{\forall,\exists\}$
and $x_i$ are indexed variables (a formula without
quantifiers is called \emph{quantifier-free}).

The atomic formulae $x_i = m$ and $\kappa \rhd m$ hold respectively
when the variable $x_i$ is assigned the message $m$ and when
$\kappa$ (representing the intruder's knowledge) can derive $m$.
Notice that quantification is over indexes $i$ because \pl\ predicates
over the \emph{instances} of the principals concurrently executed.
A principal \emph{instance} is a cIP principal indexed with a
natural number; for example, the instance of the NS initiator
obtained by indexing the principal $A$ in~(\ref{ns:eq}) with $2$ is
\begin{equation}\label{instance:eq}
 A_{2}:(r_{2})[out(\{na_{2}, A_{2}\}_{r_{2}^{+}}).in(\{na_{2},?z_{2}\}_{A_{2}^{-}}).out(\{z_{2}\}_{r_{2}^{+}})]
 \end{equation}
 we let $[X]$ be the set of all instances of a principal $X$; e.g.,
 the instance\footnote{ Hereafter, we denote an instance simply by the
   indexed name of the principal; for example, the instance above will
   be referred to as $A_2$.  } in~(\ref{instance:eq}) is in $[A]$
(\S~\ref{ph} illustrates how the transition system of cIP instances
of a protocol is obtained).

As an example of \pl\ formula consider the formula $\psi_{NS}$
predicating on (instances of) the NS protocol:
\[
  \forall i: A.\ \exists j:B\ (x_{j} = na_{i} \ \wedge \ z_{i} = nb_{j}).
\]
The formula $\psi_{NS}$ states that for all instances of $A$ there
should be an instance of $B$ that has received the nonce $na_i$ sent
by $A_i$ and the nonce $nb_j$ is received by the instance $A_i$.

\subsection{Basics of heuristics}\label{h:sec}
As mentioned in \S~\ref{intro:sec}, the approaches such as symbolic MC
can be used to tackle the problem of state space explosion.
However, even with the use of such approaches the search space can
grow enormously.
It is therefore desirable to look into methods through which search
space can be generated/explored more efficiently using \emph{informed
  search algorithms} that are characterized by the use of a
\emph{heuristic function} (also called evaluation function).
A heuristic function assigns a weight to nodes by estimating their
``distance'' from a \emph{goal node}.

We recall here the basic concepts on heuristic algorithms by
means of a simple example and refer the reader to~\cite{k8} for a
deeper presentation.

The $n$-puzzle (also known as the sliding-block or tile-puzzle) is a
well-known puzzle in which the goal is to move square tiles by sliding
them horizontally or vertically in one empty tile.
For $n=8$ the goal configuration is depicted in Figure~\ref{fig:goal};
a possible initial configuration is in Figure~\ref{fig:start}.
The problem of finding the shortest path leading to the goal
configuration is NP-hard.
\begin{figure}[h]
\begin{minipage}[b]{0.5\linewidth}
\centering
\includegraphics[width=3.18cm,height=2.67cm]{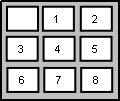}
\caption{The 8-puzzle goal configuration} \label{fig:goal}
\end{minipage}
\hspace{0.5cm}
\begin{minipage}[b]{0.5\linewidth}
\centering
\includegraphics[width=3.18cm,height=2.67cm]{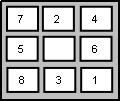}
\caption{A possible start configuration} \label{fig:start}
\end{minipage}
\end{figure}


A very simple heuristic (cf.~\cite{k8}) for 8-puzzle can be given by
\[
h_{1} = \text{number of misplaced tiles.}
\]
For each configuration, $h_{1}$ counts the number of misplaced tiles
with respect to the goal configuration.
For instance, $h_1$ weights 8 the configuration in Figure~\ref{fig:start}
since all the tiles are misplaced.

Another heuristic (cf.~\cite{k8}) for 8-puzzle is the one that
exploits the so called \emph{Manhattan distance}
\[
h_{2} = \text{sum of the Manhattan distances of non-empty tiles from their target positions.}
\]
So the configuration in Figure~\ref{fig:start} is weighted 18 by
$h_{2}$.

An important property of heuristics is \emph{admissibility}; an
admissible heuristic is one that never over estimates the cost to
reach the goal node.
Both $h_{1}$ and $h_{2}$ are admissible; in fact, $h_1$ is clearly
admissible as each misplaced tile will require at least one step to
be on its right place, and $h_2$ is also admissible as at each step
the tiles will be at most one step closer to goal.
A non-admissible heuristic is $h_{3}=h_1 * 4$; in fact, if only one
tile is misplaced with respect to a goal configuration, $h_{3}$ will
return 4 which is an overestimation of the distance to goal.

\section{A  Heuristic for Security Protocols}\label{ph}
This section introduces our original contribution (\S~\ref{heu}), namely the
heuristic for effectively searching the state space generated for MC cryptographic protocols.
As mentioned earlier, there is not much work done in this regard.
Specifically (to the  best of our knowledge) no work exists that can
prune a state space in  verification of cryptographic protocols.

The heuristic function is defined on the security formula expressed in
\pl.

The heuristic is efficient as it will not only guide the searching
algorithm towards promising regions of the graph but can also prune
those parts of the state space where attack cannot happen under a
given security formula.

The heuristic is defined in terms of two mutually recursive functions
$\mathcal{H}_{s}$ and $\mathcal{H}_{t}$ which assign weights to states
and transitions respectively.
The state space is obtained according to the semantics of cIP defined
in \cite{bft06}.
For lack of space, an informal presentation of the semantics is given
here.

\subsection{The state space}
A state consists of a tuple $\langle\mathcal{C},\chi,\kappa\rangle$
where
\begin{itemize}
\item $\mathcal{C}$ is a context containing principal instances which
  joined the session,
\item $\chi$ is a mapping of variables to messages, and
\item  $\kappa$ is a set of messages representing the intruder knowledge.
\end{itemize}
A transition from one state to another can be the result of \emph{out}
and \emph{in} actions performed by principal instances or of
\emph{join} operations non-deterministically performed by the intruder;
join transitions may instantiate open variables by assigning them with
the identity of some principal (provided it is in $\kappa$).
Initially, $\mathcal C$ is empty and therefore the only possible transitions
are join ones.
When $\mathcal C$ contains an instance ready to send a message, an \emph{out}
transition can be fired so that the sent message is added to $\kappa$.
If $\mathcal C$ contains a principal ready to receive a message, the
intruder tries to derive from the messages in $\kappa$ a message that
matches the pattern specified in the input action (see \S~\ref{bg}); if
such a message is found $\chi$ is updated to record the assignments to
the variables occurring in the input action.

For instance, a few possible transitions for the NS protocol are
\[
  s_0 \stackrel{join}\longrightarrow s_1 \stackrel{join}\longrightarrow s_2\stackrel{out}\longrightarrow s_3 \stackrel{in}\longrightarrow s_4
\]
where $s_0 = \langle \ \emptyset, \emptyset, \kappa_0  \ \rangle$
with $\kappa_0 = \{ I,I^{+},I^{-} \}$, namely initially no principal instance
joined the context, there is no assignment to variables, and the intruder
only knows its identity and public/private keys.

The join transition from $s_0$ to $s_1$ adds a principal instance
$B_2$ to the context yielding
\begin{align*}
  s_{1}=  \langle \  \{&()[ in(\{?x_{2},?y_{2} \}_{B_{2}^{-}}).out( \{ x_{2},nb_{2}\}_ {y_{2}^{+}}).in(\{nb_{2} \}_{B_{2}^{-}})] \},\\& \emptyset, \\& \kappa_1=\kappa_0 \cup \{B_{2},{B_{2}}^+ \} \ \rangle
\end{align*}
that is, the intruder now knows $B_2$'s identity and (by default) its
public key.
Similarly, the transition from $s_1$ to $s_2$ adds the principal
instance $A_1$ to context and therefore
\begin{align*}
  s_2=\langle \ \{&()[ in(\{?x_{2},?y_{2} \}_{B_{2}^{-}}). out( \{ x_{2},nb_{2} \}_{y_{2}^{+}}).in(\{nb_{2} \}_{B_{2}^{-}})], ()[out(\{na_{1},
  A_{1}\}_{B_{2}^{+}}).in(\{na_{1},?z_{1}\}_{A_{1}^{-}}).out(\{z_{1}\}_{B_{2}^{+}})] \}, \\ & \{r_{1}\mapsto B_{2}\}, \\ & \kappa_2 =\kappa_1 \cup \{A_{1},
  {A_{1}}^+ \}\ \rangle
\end{align*}
Notice that the open variable $r_1$ is now mapped to $B_{2}$.

The transitions from $s_2$ to $s_3$ is due to an $out$ action
\begin{align*}
  s_3=\langle \ \{&()[ in(\enc{?x_{2},?y_{2}}{B_{2}^{-}}).out( \{
  x_{2},nb_{2} \}_{y_{2}^{+}}).in(\{nb_{2}\}_{B_{2}^{-}})],
  ()[in(\{na_{1},?z_{1}\}_{A_{1}^{-}}).out(\{z_{1}\}_{B_{2}^{+}})]\} ,
  \\& \{r_{1}\mapsto B_{2}\},\\& \kappa_3= \kappa_2 \cup \{
  na_{1},A_{1}\}_{B_{2}^{+}} \ \rangle
\end{align*}
the prefix of $A_1$ is consumed and the message is added to the
intruder's knowledge.

Finally, the transition from $s_3$ to $s_4$ is due to an \emph{in}
transition for the input prefix of $B_2$.
The message $\enc{na_{1},A_{1}}{B_{2}^{+}}$ added to the intruder's
knowledge in the previous transition matches the pattern
$\enc{?x_{2},?y_{2}}{B_{2}^{-}}$ specified by $B_2$, therefore the
$x_{2}$ and $y_{2}$ are assigned to $na_1$ and $A_1$ respectively.
Hence,
\begin{align*}
  s_4=\langle \ \{& out( \{ na_1,nb_{2} \}_{A_{1}^{+}}).in(\{nb_{2}\}_{B_{2}^{-}})],()[in(\{na_{1},?z_{1}\}_{A_{1}^{-}}).out(\{z_{1}\}_{B_{2}^{+}})]\},
  \\& \{r_{1}\mapsto B_{2} ,x_{2} \mapsto na_1, y_{2} \mapsto A_1\},
  \\ &  \kappa_3  \ \rangle
\end{align*}


In our framework, join transitions can be safely anticipated before
any other transition (Observation 10.1.3 in~\cite{tuo02}, page 174).

\subsection{The heuristic}\label{heu}
For simplicity and without loss of generality, we define the
heuristic on \emph{Prenex Normal Form} (PNF) formulae defined below.
\begin{defn} \label{pnf1} [Prenex Normal Form]
  A \pl\ formula is in \emph{prenex normal form} if it is of the form
  \[Q_{1}i_{1}:A_{1}. \cdots .Q_{n}i_{n}:A_{n}.\phi\]
  where $\phi$ is a quantifier-free formula and, for $1 \leq j \leq
  n$, $Q_j \in \{ \forall, \exists\}$, each $i_j$ is an index
  variable, and $A_j$ is a principal name.
\end{defn}
Basically, a PNF formula is a formula where all the quantifiers are
"at top level".
Notice that, in Definition~\ref{pnf1}, it can be $n=0$ which amounts
to say that a quantifier free formula is already in PNF.
\begin{thm}\label{pnf}
Any \pl\ formula can be transformed into a logically equivalent PNF
formula.
\end{thm}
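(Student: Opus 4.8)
The plan is to proceed by structural induction on \pl\ formulae, floating quantifiers outward one connective at a time until they all sit at the top level. The engine of the induction is a handful of quantifier-shifting equivalences, which I would state and justify first: the De Morgan dualities $\lnot\,\forall i:A.\psi \equiv \exists i:A.\lnot\psi$ and $\lnot\,\exists i:A.\psi \equiv \forall i:A.\lnot\psi$, together with the distribution laws $(Qi:A.\psi)\land\theta \equiv Qi:A.(\psi\land\theta)$ and $(Qi:A.\psi)\lor\theta \equiv Qi:A.(\psi\lor\theta)$ (and their symmetric variants obtained by commutativity of $\land$ and $\lor$), each valid provided the bound index $i$ does not occur free in $\theta$. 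Because \pl\ quantifies over indices ranging on instances of a principal, I would verify these laws directly against the semantics of~\cite{bft06}; they coincide with the classical first-order equivalences, so the checks are routine.

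The base case is immediate: the atomic formulae $x_i = m$ and $\kappa \rhd m$ are quantifier-free, hence already in PNF with $n=0$. For the inductive cases I would argue as follows. If the formula is $Qi:A.\psi$, the induction hypothesis yields a PNF $\psi$, and prepending $Qi:A$ leaves a PNF formula. If it is $\lnot\psi$, I apply the hypothesis to $\psi$ to get a prefix $Q_1 i_1:A_1.\cdots.Q_n i_n:A_n.\phi$ with $\phi$ quantifier-free, then push the negation inward through the whole prefix using the two dualities, flipping each $Q_j$ and negating the matrix to obtain $\overline{Q_1}\,i_1:A_1.\cdots.\overline{Q_n}\,i_n:A_n.\lnot\phi$. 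For $\psi\land\theta$ (and dually $\psi\lor\theta$) I first put both subformulae in PNF by the hypothesis, then repeatedly apply the distribution laws to float the quantifiers of the first subformula to the front, and then those of the second, collapsing everything into a single prefix over a quantifier-free matrix.

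The step needing the most care — and the one I expect to be the main obstacle — is avoiding variable capture when a quantifier is floated past a subformula in which its index occurs free. The remedy is the standard one: before each application of a distribution law I would $\alpha$-rename the bound index to a fresh one, taking care to preserve the principal-name annotation $A$ attached to the quantifier, so that the relocated index is fresh with respect to the other subformula. This requires a short supporting lemma asserting that such renaming preserves logical equivalence in the \pl\ semantics. Finally, I would note that the resulting matrix is genuinely quantifier-free, since none of the dualities or distribution laws creates new quantifiers: they only relocate or dualise existing ones, so the number of quantifiers is invariant and, by the induction, all of them end up in the prefix.
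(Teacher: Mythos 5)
Your proposal is correct and takes essentially the same route as the paper: the paper packages the argument as a recursive function $\pnf$ together with two lemmas (that $\pnf(\psi)$ is in PNF, and that $\pnf(\psi)\Leftrightarrow\psi$), and those lemmas are proved by exactly your structural induction, using the same De Morgan dualities, the same distribution laws such as $(Qi{:}A.\,\psi)\land\phi \Leftrightarrow Qi{:}A.\,(\psi\land\phi)$, and the same renaming of bound indices to fresh ones to avoid capture. The only differences are presentational: the paper hoists one quantifier at a time and recurses, whereas you move whole prefixes at once, and you explicitly cover pulling quantifiers out of the second conjunct/disjunct, a case the paper's definition of $\pnf$ leaves implicit.
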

\begin{proof}
Let the function $\pnf: $\pl $\rightarrow $\pl\ be defined as follows:
\[
  \pnf(\psi)= \begin{cases}
   \psi  & \psi \ \text{is a quantifier-free formula.}
   \\
   Qi:A.\pnf(\psi') &\psi \equiv Qi:A.\psi'
   \\
   \overline {Q}i:A.\pnf(\neg\psi'')
   &\psi \equiv\neg\psi' \text{ and } \pnf(\psi') \equiv Qi:A.\psi''\\
   Qi':A.\pnf(\psi_{1}'[i'/i] \wedge \psi_{2})
   & i' \text{ fresh}, \psi \equiv \psi_{1}
   \wedge \psi_{2} \text{ and }
   \pnf(\psi_{1})\equiv Qi:A.\psi_{1}' \\
   Qi':A. \pnf(\psi_{1}'[i'/i] \vee \psi_{2})
   & i' \text{ fresh}, \psi \equiv \psi_{1}
   \vee \psi_{2} \text{ and } \pnf(\psi_{1}) \equiv Qi:A.\psi_{1}'
 \end{cases}
\]
The proof of theorem \ref{pnf} follows from the properties of $\pnf$
given by Lemmas \ref{transform} and \ref{logical} below.
\end{proof}
\begin{lemm}\label{transform}
For any \pl formula $\psi$, $\pnf(\psi)$ is in PNF.
\end{lemm}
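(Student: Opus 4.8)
The plan is to argue by well-founded induction rather than by plain structural induction, since the clauses defining $\pnf$ do not recurse only on syntactic subformulae: in the negation clause the recursive argument $\neg\psi''$ is built from the body of $\pnf(\psi')$, and in the $\wedge$/$\vee$ clauses the argument $\psi_1'[i'/i]\wedge\psi_2$ (resp.\ with $\vee$) is built from the body of $\pnf(\psi_1)$. First I would fix a measure on \pl\ formulae, namely the lexicographic pair $\mu(\psi)=(q(\psi),s(\psi))$, where $q(\psi)$ counts the quantifier occurrences in $\psi$ and $s(\psi)$ is the number of nodes of its syntax tree, and then check that every recursive call appearing in the definition of $\pnf$ strictly decreases $\mu$. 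This simultaneously guarantees that $\pnf$ is total and licenses the induction.

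To make the measure actually decrease in the ``pull-out'' clauses I would strengthen the statement proved by induction to the conjunction of (i) $\pnf(\psi)$ is in prenex normal form (Definition~\ref{pnf1}) and (ii) $q(\pnf(\psi))=q(\psi)$, i.e.\ $\pnf$ neither creates nor destroys quantifiers. The point of (ii) is the following: in the negation clause the body $\psi''$ satisfies $q(\psi'')=q(\pnf(\psi'))-1$, so by the inductive instance of (ii) for $\psi'$ one gets $q(\neg\psi'')=q(\psi')-1<q(\psi')=q(\neg\psi')$, whence $\mu(\neg\psi'')<\mu(\psi)$ and the inductive hypothesis applies to $\pnf(\neg\psi'')$; the $\wedge$/$\vee$ clauses are analogous, with $q(\psi_1'[i'/i]\wedge\psi_2)=q(\psi)-1$. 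For the remaining inner calls $\pnf(\psi')$ and $\pnf(\psi_1)$ the argument is a proper subformula, so $q$ does not increase while $s$ strictly decreases, again giving $\mu<\mu(\psi)$.

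Granting that the measure decreases, the case analysis is then routine. In the leading-quantifier clause, prefixing a single quantifier to the PNF formula $\pnf(\psi')$ keeps it in PNF and yields $q(\pnf(\psi))=1+q(\psi')=q(\psi)$. In each pull-out clause the inductive hypothesis gives that the inner $\pnf$ call returns a PNF formula $Qi:A.\psi_1'$ (resp.\ $Qi:A.\psi''$), that the recursive call returns a PNF formula, and that prefixing $Qi':A$ (resp.\ $\overline{Q}i:A$) preserves both PNF and the quantifier count, establishing (i) and (ii) together. The one point needing genuine care for totality is exhaustiveness of the clauses: for $\psi\equiv\psi_1\wedge\psi_2$ with $\psi_1$ quantifier-free but $\psi_2$ not, no clause literally fires, so I would invoke the commutativity of $\wedge$ (and $\vee$) to reduce to the stated case, using invariant (ii) to note that $\pnf(\psi_1)$ begins with a quantifier exactly when $\psi_1$ does; the negation clause is complete for the same reason, since a quantifier-free $\psi'$ makes $\neg\psi'$ quantifier-free and falls under the first clause.

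The main obstacle is precisely the interplay described above: the decrease argument for the pull-out clauses cannot be carried out with the naive structural size or quantifier count alone, and only goes through once the quantifier-preservation invariant (ii) is threaded through the same induction that establishes (i). Everything else---the base case, the arithmetic of $q$ and $s$, and the closure of PNF under prefixing a quantifier---is bookkeeping.
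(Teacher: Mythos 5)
Your proof is correct and, in substance, a repaired version of the paper's own argument. The paper proves Lemma~\ref{transform} by plain structural induction with exactly the case analysis you give, but it silently applies the inductive hypothesis to formulae that are \emph{not} structural subformulae of $\psi$: in the conjunction case to $\pnf(\psi_{1}'[i'/i]\wedge\psi_{2})$, where $\psi_{1}'$ is the body of $\pnf(\psi_{1})$, and in the negation case to $\pnf(\lnot\psi'')$, where $\psi''$ is the body of $\pnf(\psi')$. That is precisely the defect you identify, and your lexicographic measure $(q,s)$ together with the quantifier-preservation invariant (ii) is what is needed to license those uses of the hypothesis; the paper's proof can be read as your argument with this bookkeeping suppressed. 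You also catch a second point the paper passes over entirely: the clauses defining $\pnf$ are not exhaustive (e.g.\ $\psi_{1}\wedge\psi_{2}$ with $\psi_{1}$ quantifier-free but $\psi_{2}$ not), so $\pnf$ as written is partial, and the paper's case analysis merely ``assumes'' $\pnf(\psi_{1})\equiv Qi{:}A.\psi_{1}'$ without discussing the complementary case; your repair via commutativity of $\wedge$ and $\vee$ (equivalently, adding the symmetric clause) is the natural one, and your observation that invariant (ii) makes the negation clause exhaustive is exactly right. In short: same decomposition and same per-case reasoning, but your well-founded induction with the strengthened invariant makes the proof actually go through (and establishes totality of $\pnf$ along the way), at the modest cost of threading $q(\pnf(\psi))=q(\psi)$ through the induction, whereas the paper's shorter structural induction has genuine gaps at the recursive calls and at exhaustiveness.
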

\begin{proof}
  We proceed by induction on the structure of $\psi$.

  If $\psi$ is a quantifier free formula then it is in PNF and,
  by definition of $\pnf$, $\pnf(\psi) = \psi$.

  The inductive case is proved by case analysis.
  \begin{itemize}
  \item Assume $\psi$ is $Qi:A.\psi'$, then by definition of $\pnf$,
    $\pnf(\psi)= Qi:A.\pnf(\psi') $. By inductive hypothesis
    $\pnf(\psi')$ is in PNF and therefore $\pnf(\psi)$ is in PNF.
  \item If $\psi= \psi_{1} \land \psi_{2}$ then, assuming $
    \pnf(\psi_{1})= Qi:A.\psi_{1}'$, by definition of $\pnf$
    \[ \pnf(\psi) = Qi':A. \pnf ( \psi_{1}'[i'/i]\wedge\psi_{2}) \]
    For fresh index $i'$ not occuring in $\psi_2$.
    By inductive hypothesis $ \pnf(\psi_{1}'[i'/i]\land\psi_{2})$ is in
    PNF, therefore $ \pnf(\psi)$ is in PNF.
  \item The case $\psi= \psi_{1} \lor \psi_{2}$ is analogous.
  \item If $\psi= \lnot\psi'$ then, assuming $
    \pnf(\psi')=Qi:A.\psi''$, by definition of $\pnf$, $ \pnf(\psi) =
    \overline {Qi}:A. \pnf(\lnot\psi'')$. By inductive hypothesis
    $\pnf(\lnot\psi'')$ is in PNF, therefore $\pnf(\psi)$ is in PNF.
  \end{itemize}
\end{proof}

\begin{lemm}\label{logical}
$ \pnf(\psi)\Leftrightarrow \psi$.
\end{lemm}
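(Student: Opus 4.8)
The plan is to prove $\pnf(\psi)\Leftrightarrow\psi$ by induction following the five clauses defining $\pnf$, exactly as in the proof of Lemma~\ref{transform}. The point needing care is that two of the recursive calls---$\pnf(\neg\psi'')$ in the negation clause and $\pnf(\psi_1'[i'/i]\wedge\psi_2)$ in the conjunction clause---are not applied to syntactic subformulae of $\psi$, so a plain structural induction does not close. I would instead induct on the well-founded lexicographic measure $(\#_Q(\psi),\,|\psi|)$, with $\#_Q$ the number of quantifiers and $|\cdot|$ the syntactic size. Since $\pnf$ preserves the number of quantifiers (visible clause by clause), each recursive argument is strictly smaller in this order: extracting a leading quantifier and pushing it outward strictly decreases $\#_Q$, whereas the auxiliary call $\pnf(\psi_1)$ that merely exposes the leading quantifier of $\psi_1$ keeps $\#_Q$ fixed but strictly decreases $|\cdot|$.

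The base case, $\psi$ quantifier-free, is immediate since $\pnf(\psi)=\psi$, and the clause $\psi\equiv Qi:A.\psi'$ follows from the induction hypothesis $\pnf(\psi')\Leftrightarrow\psi'$ by congruence of $\Leftrightarrow$ under the context $Qi:A.(-)$. For the negation clause $\psi\equiv\neg\psi'$ with $\pnf(\psi')\equiv Qi:A.\psi''$ I would chain three equivalences: the induction hypothesis on $\psi'$ giving $\neg\psi'\Leftrightarrow\neg(Qi:A.\psi'')$; the quantifier-duality law $\neg(Qi:A.\psi'')\Leftrightarrow\overline{Q}i:A.\neg\psi''$; and the induction hypothesis on $\neg\psi''$, legitimate because $\#_Q(\neg\psi'')<\#_Q(\psi)$, giving $\overline{Q}i:A.\pnf(\neg\psi'')\Leftrightarrow\overline{Q}i:A.\neg\psi''$. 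Composing them is exactly $\pnf(\psi)\Leftrightarrow\psi$. The conjunction and symmetric disjunction clauses are identical in shape, with duality replaced by the prenex extraction law $(Qi:A.\psi_1')\wedge\psi_2\Leftrightarrow Qi':A.(\psi_1'[i'/i]\wedge\psi_2)$ and the induction hypotheses taken on $\psi_1$ and on $\psi_1'[i'/i]\wedge\psi_2$.

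The main obstacle I expect is justifying this prenex extraction law, the only place where the semantics of \pl\ quantification over principal instances is genuinely used. Two things must be checked. First, the renaming $[i'/i]$ must be capture-avoiding and truth-preserving, an $\alpha$-conversion step that relies on $i'$ being fresh and absent from $\psi_2$ so that the quantifier can be scoped over the whole conjunction. Second, and more delicately, $(\forall i{:}A.\theta)\wedge\chi\Leftrightarrow\forall i{:}A.(\theta\wedge\chi)$ and $(\exists i{:}A.\theta)\vee\chi\Leftrightarrow\exists i{:}A.(\theta\vee\chi)$ hold only when the set of instances of $A$ is non-empty; over an empty instance domain the vacuously true $\forall$ (resp. vacuously false $\exists$) breaks them. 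I would therefore check that the \pl\ semantics rules out, or otherwise accounts for, an empty instance domain before asserting the equivalence in full generality; the duality law used in the negation clause needs no such proviso, and everything else reduces to propositional congruence.
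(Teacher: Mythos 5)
Your proposal is, at bottom, the same argument the paper gives: induction along the five clauses of $\pnf$, closing the quantifier case by congruence, the negation case by quantifier duality, and the binary cases by the prenex extraction law. But the two points where you are more careful are exactly the two points where the paper's own proof is deficient. The first is the induction measure: the paper also claims to proceed ``by induction on the structure of $\psi$'', yet it needs the inductive hypothesis on $\neg\psi''$ and on $\psi_1'[i'/i]\wedge\psi_2$, neither of which is a subformula of $\psi$. Your lexicographic measure $(\#_Q(\psi),|\psi|)$, justified by the clause-by-clause preservation of quantifier count under $\pnf$, is precisely what makes the induction well-founded; the paper glosses over this entirely.

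The second point is not a proviso you can discharge --- it is fatal to the lemma as stated. Under the semantics of Appendix~\ref{model:app}, $\kappa\models_\chi \forall i{:}A.\,\phi$ holds vacuously whenever $\kappa$ contains no instance of $A$, and $\kappa\models_\chi \exists i{:}A.\,\phi$ fails in that case; and \pl\ models with empty instance domains are not excluded by the framework --- they are central to it (the initial knowledge $\kappa_0=\{I,I^+,I^-\}$ contains no principal instances, and the $-\infty$ case of $\mathcal{H}_s$ is built around $s\cap[A]=\emptyset$). Concretely, let $\psi=(\forall i{:}A.\;x_i=m)\,\wedge\,(\kappa\rhd m)$ and take a model whose knowledge contains no instance of $A$ and cannot derive $m$: then $\psi$ is false, since its right conjunct fails, while $\pnf(\psi)=\forall i'{:}A.\,(x_{i'}=m \,\wedge\, \kappa\rhd m)$ is vacuously true. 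So the equivalence $(Qi{:}A.\,\psi)\wedge\phi\Leftrightarrow Qi{:}A.\,(\psi\wedge\phi)$ that the paper declares ``trivial to prove'' is false for $Q=\forall$ (and dually for $\exists$ with $\vee$), and Lemma~\ref{logical} itself fails unless one either assumes every quantified principal has at least one instance in $\kappa$ or weakens the statement accordingly. Your write-up is as complete as an honest proof can be: the residual check you flag does not go through, for the paper's proof just as much as for yours.
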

\begin{proof}
  We proceed by induction on the structure of $\psi$.

  If $\psi$ is a quantifier free formula then $\pnf (\psi) = \psi$ and
  therefore $ \pnf (\psi) \Leftrightarrow\psi$.

  Again the proof for the inductive case is given by case analysis.
  \begin{itemize}
  \item Assume $\psi$ is $Qi:A.\psi'$, then by definition of $\pnf$, $\pnf(\psi)= Qi:A.\pnf(\psi') $. By inductive hypothesis
    $\pnf(\psi') \Leftrightarrow \psi'$ hence $\pnf(\psi)\equiv
    Qi:A. \psi'$ and therefore $\pnf(\psi) \Leftrightarrow \psi$.
  \item If $\psi=\psi_{1}\wedge \psi_{2}$ then, assuming $\pnf
    (\psi_{1})= Qi:A.\psi_{1}'$, by definition of $\pnf$
    \[ \pnf(\psi)= Qi':A.\pnf(\psi_{1}'[i'/i]\land\psi_{2}) \] For $i'$
    fresh (namely, $i'$ does not occur in $\psi_{2}$).
    By inductive hypothesis $ \pnf (\psi_{1})\Leftrightarrow\psi_{1}$
    hence
    \[ \psi\Leftrightarrow \pnf (\psi_{1})\land
    \psi_{2}=(Qi:A.\psi_{1}')\land \psi_{2} \]
    It is trivial to prove that for any \pl\ formula $(Qi: A.\psi) \land
    \phi \Leftrightarrow Qi:A.(\psi \land \phi)$ and therefore
    $\pnf(\psi)\Leftrightarrow\psi$.
  \item The proof for $\psi= \psi_{1} \lor \psi_{2}$ is similar.
  \item If $\psi= \lnot \psi'$ then, assuming $ \pnf (\psi')=
    Qi:A.\psi''$, by definition of $\pnf$, $\pnf
    (\psi)=\overline{Qi}:A. \pnf (\lnot\psi'')$. By inductive
    hypothesis $ \pnf (\psi')\Leftrightarrow \psi'$ and therefore
    $\psi\Leftrightarrow\lnot pnf(\psi')$, hence $\psi \Leftrightarrow
    \lnot(Qi:A.\psi'') \Leftrightarrow \overline {Qi}:A.\lnot\psi''$
    and therefore $ \pnf (\psi) \Leftrightarrow\psi$.
  \end{itemize}
\end{proof}

The heuristic function $\mathcal{H}_{s}$ is given in
Definition~\ref{hc} and depends on the function $\mathcal H_t$ given
in Definition~\ref{ha} below.
\begin{defn}[Weighting states]\label{hc}
  Given a state $s$ and a formula $\phi$, the \emph{state weighting
    function} is given by
  \[
    \mathcal{H}_{s}(s, \phi) =
    \begin{cases}
      \understackrel{\max}{t\in s\Upsilon} \ {\mathcal{H}_{t}(t,\phi)}, & s\Upsilon \ \neq\ \emptyset
      \\
      -\infty, & \phi \equiv \forall i:A. \ \phi' \ \wedge \
       s \Upsilon = \emptyset \ \wedge \ s \cap [A]=\emptyset
      \\
      0, & \text{otherwise}
    \end{cases}
  \]
  where $s\Upsilon$ is the set of join transitions departing from $s$
  and, assuming $s = \langle \mathcal C,\chi,\kappa\rangle$, $s \cap
  [A]$ stands for $\kappa \cap [A]$.
\end{defn}
The function $\mathcal{H}_{s}$ takes a state, say $s = \langle
\mathcal{C},\chi,\kappa\rangle$, and a formula $\phi$ as input and
returns the maximum among the weights computed by $\mathcal H_t$ on
the join transitions departing from $s$ for $\phi$.
The weight $-\infty$ is returned if
\begin{itemize}
\item $\phi$ is a universal quantification on a principal instance $A$
  ($\forall i:A. \ \phi'$),
\item $s$ does not have outgoing join transitions
  ($s\Upsilon = \emptyset)$, and
\item there is no instance of $A$ in the context
  ($s \cap [A]=\emptyset$).
\end{itemize}

The heuristic $\mathcal{H}_{s}$ has been designed considering that a
formula universally quantified on instances of $A$ is falsified in
those states where there is at least one instance of $A$.
Therefore a context that does not have an instance of the quantified
principal, has no chance of falsifying the formula.
In fact, the condition $s\Upsilon = \emptyset$ ensures that no
principal instance can later join the context.
As a result, there is no possibility of falsifying the property in all
paths emerging from this state which can therefore be pruned.
This justifies the second case of $\mathcal{H}_s$ where the value
$-\infty$ is assigned to such states.

The heuristic that assigns weights to transitions is given in
Definition~\ref{ha}.
\begin{defn}[Weighting transitions]\label{ha}
  Given a state $s$ and a transition $t$ from $s$ to
  $s'= \langle\mathcal{C}',\chi',\kappa'\rangle$
  in $s\Upsilon$, the \emph{weighting transitions} function
  $\mathcal{H}_{t}$ is
  \[
  \mathcal{H}_{t}(t,\phi) = \begin{cases}
    1+\mathcal{H}_{s}(s',\phi'),
    & \phi\equiv\ \forall i:A.\ \phi'\ \wedge\ \kappa'\cap[A] \neq \emptyset ,\\
    1+\mathcal{H}_{s}(s',\phi),
    & \phi\equiv \exists i: A.\ \phi'\ \wedge \ \kappa' \cap[A] = \emptyset ,\\
    \mathcal{H}_{s}(s',\phi'),
    &\phi\equiv \exists i:A.\ \phi' \ \wedge \ \kappa' \cap [A] \neq \emptyset,\\
    \mathcal{H}_{s}(s',\phi),
    &\phi\equiv\ \forall i:A.\ \phi' \ \wedge \ \kappa' \cap [A]=\emptyset, \\
    0 & otherwise.
  \end{cases}
\]
\end{defn}
The function $\mathcal{H}_t$ takes as input a transition $t$ and
invokes $\mathcal H_s$ to compute the weight of $t$ depending on the
structure of the formula $\phi$.
As specified in Definition~\ref{ha}, the value of the weight of the
arrival state is incremented if either of the two following mutually
exclusive conditions hold:
\begin{itemize}
\item $\phi$ universally quantifies on a principal instance $A$
  ($\forall i:A.\ \phi'$) for which some instances have already joined
  the context ($\kappa'\cap[A] \neq \emptyset$);
\item $\phi$ existentially quantifies on a principal instance $A$
  ($\exists i:A.\ \phi'$) which is not present in the context
  ($\kappa' \cap [A] = \emptyset$).
\end{itemize}
Instead, the heuristic $\mathcal H_t$ does not increment the weight
of the arrival state if either of the following mutually
exclusive\footnote{
  Note that all the conditions of the definition of $\mathcal H_t$ are
  mutually exclusive.
} conditions hold:
\begin{itemize}
\item $\phi$ existentially quantifies on instances of $A$ ($\exists
  i:A.\ \phi'$) and present in the context ($\kappa'\cap[A] \neq \emptyset$);
\item $\phi$ universally quantifies on instances of $A$ ($\forall
  i:A.\ \phi'$) and the context does not contain such instances
  ($\kappa'\cap[A] = \emptyset$).
\end{itemize}
Again the intuition behind $\mathcal{H}_{t}$ is based on quantifiers.
The formula $\phi$ that universally (resp. existentially)
quantifies on instances of $A$ can be falsified only if such instances
will (resp. not) be added to the context.
Therefore all transitions that (resp. do not) add an instance of $A$
get a higher value.
 It is important to mention that in the first
and third cases of Definition~\ref{ha}, the recursive call to
$\mathcal{H}_s$ takes in input $\phi'$, the subformula of $\phi$ in
the scope of the quantifier. This is due to the fact that once an
instance of the quantified principal has been added we are not
interested in more instances and therefore consume the quantifier.
The heuristic $\mathcal H_t$ returns $0$ when $\phi$ is a quantifier
free formula.
In fact, due to the absence of quantifiers we cannot assess how
promising is $t$ to find an attack for $\phi$.
We are investigating if in this case a better heuristic is possible.

Finally, we remark that $\mathcal H_s$ and $\mathcal H_t$ terminate
on a finite state space because the sub-graph consisting of the join
transitions forms a tree by construction\footnote{
  For page limits we do not prove it formally, but it can easily be
  checked by the informal description of join transitions given in
  this section.
}.
Therefore, the recursive invocations from $\mathcal H_t$ to
$\mathcal H_s$ will eventually be resolved by the last two cases of
$\mathcal H_s$ in Definition~\ref{hc}.

 \section {Evaluation of the Heuristic}\label{ha:sec}
In this section we describe with the help of examples how $\mathcal H_s$ and $\mathcal H_t$ can find attacks without exploring the complete state
space.
In the first example the heuristic is applied on the NS protocol and
in the second example it is applied on the KSL protocol.

We also prove the correctness of the heuristic.

\subsection{Applying the heuristic to the Needham-Schroeder protocol}\label{ns:sec}
Let us consider the property $\psi_{NS}$ given in \S~\ref{cip:sec} as
$\forall i: A.\ \exists j:B\ (x_{j} = na_{i} \ \wedge \ z_{i} =
nb_{j})$.
Figure~\ref{figure1} illustrates a portion of the state space of the NS protocol
after the first two join transitions when $\psi_{NS}$ is considered.
Notice that $\psi_{NS}$ can be falsified in a path where there is a
context containing at least one instance of A and no instances of B.
\begin{figure}[h]
  \subfigure[Join transitions of the NS protocol]{
    \centering
    \includegraphics[scale=.5]{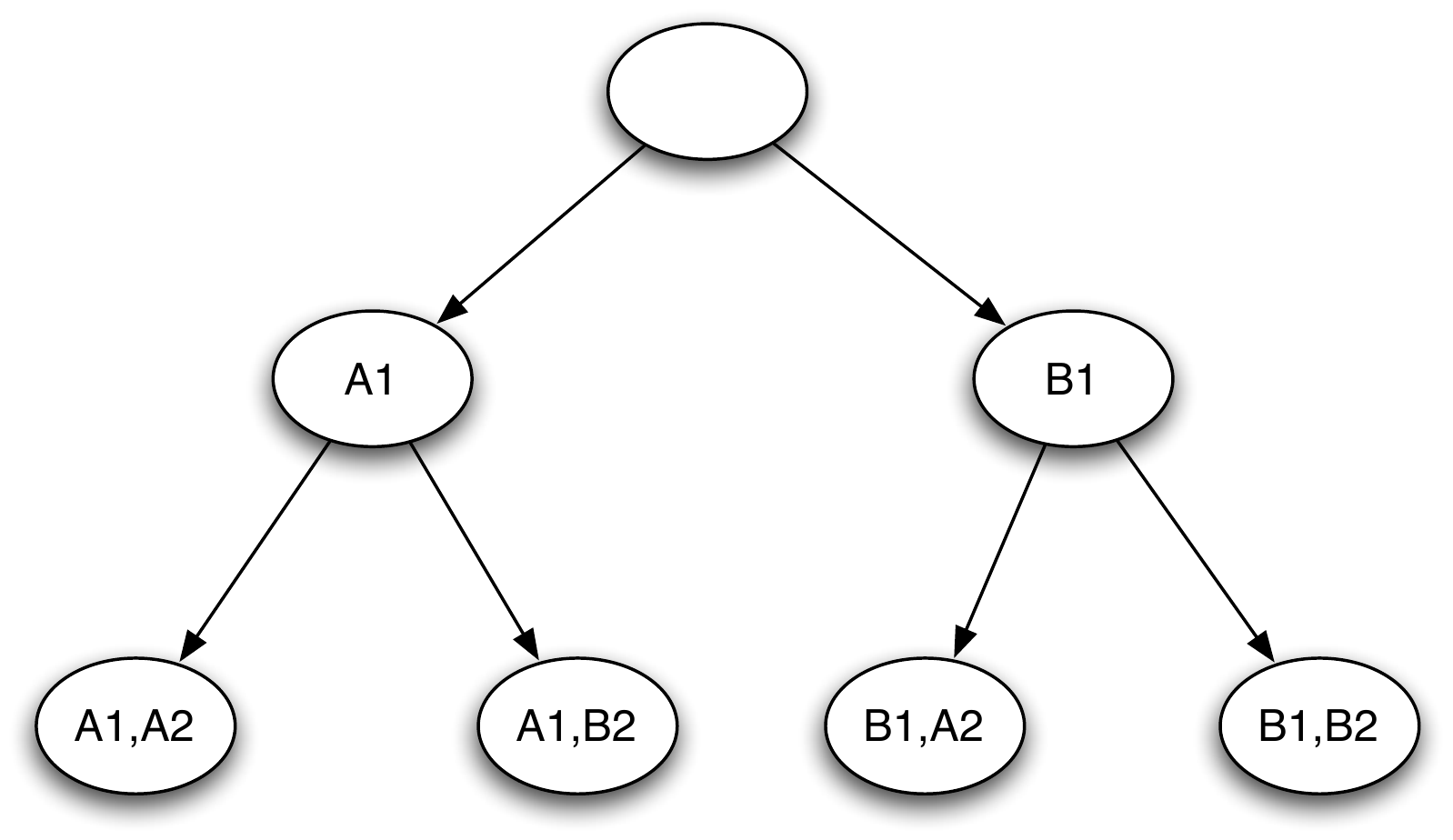}
    \label{figure1}
  }
  \subfigure[Weighted states in the NS protocol]{
    \centering
    \includegraphics[scale=.5]{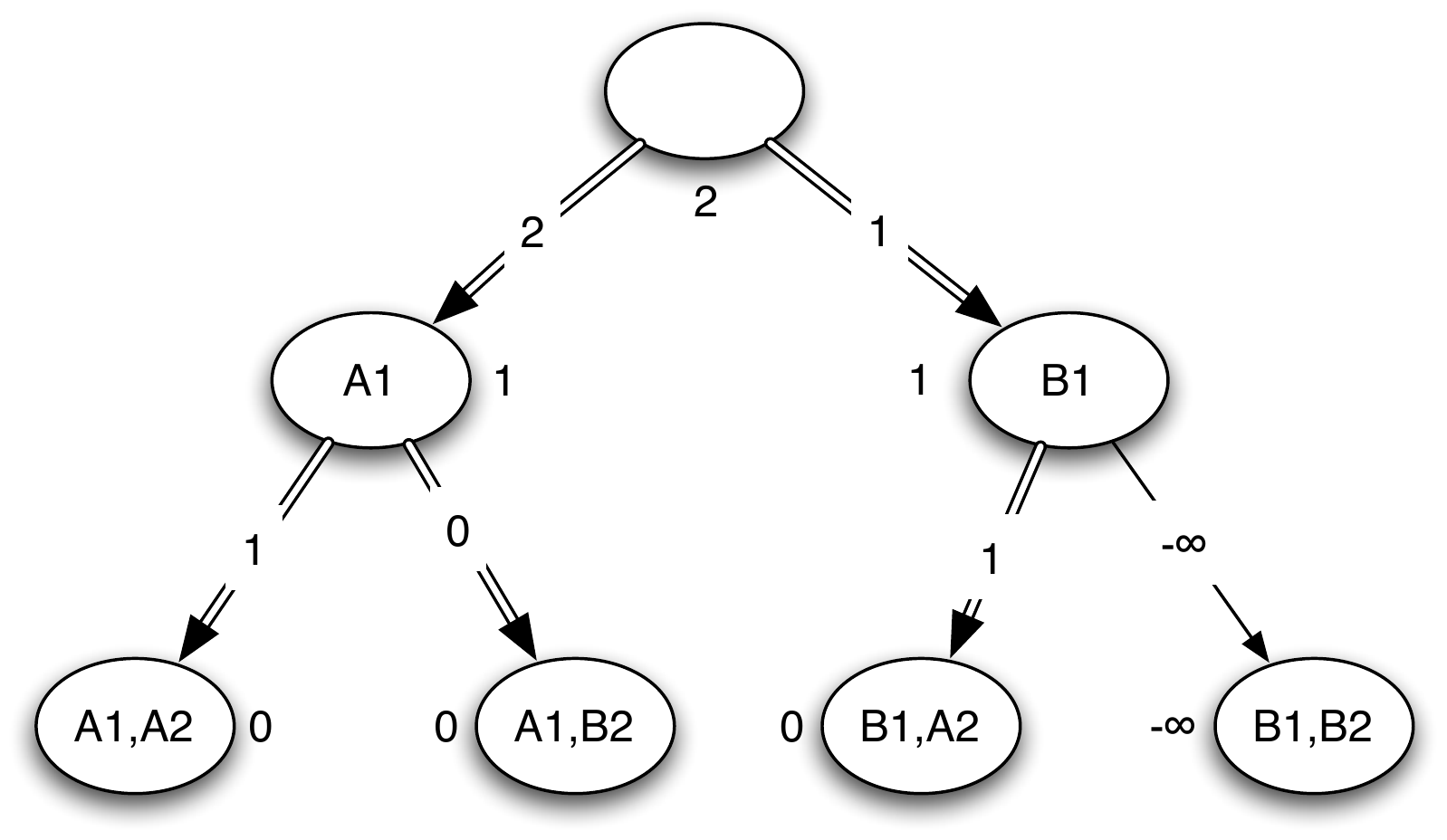}
    \label{figure2}
  }
  \caption{Join transitions and weighted states in NS protocol}
\end{figure}

The heuristic will assign weights to states and transitions as in
Figure~\ref{figure2}. The highlighted paths (those with 'fat' arrows)
are the one to be explored; the context $\{A_1,A_2\}$ contains the attack
reported below:
\[\begin{array}{lll@{\hspace{1cm}}l}
1. & A_1 \to I : & {\{na_1,A_1\}}_{I^+}  \\
2. & A_2 \to I : & {\{na_2,A_2\}}_{I^+}  \\
3. & I \to A_1 : & {\{na_1,na_2\}}_{{A_1}^{+}}  & \kappa \rhd na_1,na_2 \\
4. & I \to A_2 : & {\{na_2,na_1\}}_{{A_2}^+} & \kappa \rhd na_1,na_2 \\
5. & A_1 \to I : & {\{na_2\}}_{I^+}\\
6. & A_2 \to I : & {\{na_1\}}_{I^+}
\end{array}\]
The intruder acts as responder for both $A_1$ and $A_2$. As a result of
step 1 and 2, $\kappa$ contains $na_1$ and $na_2$; enabling the
intruder to send messages to $A_1$ and $A_2$ at step 3 and 4
respectively. This results into assignments like $z_{A_1}= na_2$ and
$z_{A_2}= na_1$, which is the falsification of stated property which
requires a nonce generated by an instance of $B$ to be assigned to the
variables.

The other two highlighted paths contain a similar attack, we
report the one with context $\{A_1,B_2\}$.
\[\begin{array}{llll}
A_1 \to I : & {\{na_1,A_1\}}_{I^+}   \\
I \to A_1 : & {\{na_1,I\}}_{{A_1}^{+}}, &\kappa \rhd na_1,I \\
A_1 \to I : & {\{I\}}_{I^+}\\
I \to B_2 : & {\{na_1,I\}}_{{B_2}^+},  &\kappa \rhd na_1,I \\
B_2 \to I : & {\{na_1,nb_2\}}_{I^+}  \\
I \to B_2 : & {\{nb_2\}}_{{B_2}^+},  &\kappa \rhd nb_2
\end{array}\]
Again at step 2 and 4, $A_1$ and $B_2$ are receiving the identity of
intruder instead of nonce by $B$, resulting into an attack.

It is evident from the Figure~\ref{figure2} that heuristic assigns
appropriate weights to the paths that contain an attack. It is
worthy mentioning that the context $\{ B_1,B_2\}$ has been labeled
$-\infty$, therefore the search will never explore this state. This
suggests that approximately 1/4th of the state space can be pruned
by applying heuristic.  This is a rough estimate taking into
consideration the symmetry in the state space (the context $\{A_{1},
A_{2}\}$ is similar to $\{B_{1}, B_{2}\}$ and $\{A_{1}, B_{2}\}$ is
similar to $\{B_{1}, A_{2}\}$).

\subsection{Applying the heuristic to the KSL protocol}\label{ksl:sec}
We consider the analysis of (the second phase of) KSL \cite{k15}, done in
\cite{bft06}.
The protocol provides repeated authentication and has two phases; in
the first phase $(i)$ a trusted server $S$ generates a session key
$kab$ to be shared between $A$ and $B$, and $(ii)$ $B$ generates the
\emph{ticket} $\enc{Tb, A, kab}{kbb}$ for $A$ (where $Tb$ is a
timestamp and $kbb$ is known only to $B$).

In the second phase, $A$ uses the ticket (until it is valid) to
repeatedly authenticate herself to $B$ without the help of $S$.
The second phase can be specified as follows:
\[\begin{array}{lll}
 1. & A \to B: & na, \enc{Tb, A, kab}{kbb}\\
 2. & B \to A: & nb, \enc{na}{kab}\\
 3. & A \to B: & \enc{nb}{kab}
\end{array}\]
$A$ sends a fresh nonce $na$ and the ticket to $B$ that accepts the
nonce challenge and sends $nb$ together with the cryptogram
$\enc{na}{kab}$ to $A$.
In the last message, $A$ confirms to $B$ that she got $kab$.

In $cIP$, $A$ and $B$ can be represented as follows:
\[
  \begin{array}{c@{\hspace{2cm}}c}
    \begin{array}{ll}
      A: (b, sk, tk)[
    & out(na,\{b,A,sk\}_{tk}).\\
    & in(?y,\{na\}_{sk}).\\
    & out(\{y\}_{sk})] \\
    \end{array}
    &
    \begin{array}{ll}
      B: (a, sk, tk) [
    & in(?x,\{B,a,sk\}_{tk}).\\
    & out(nb,\{x\}_{sk}).\\
    & in(\{nb\}_{sk})]\\
    \end{array}
   \end{array}
\]
(where for simplicity the timestamp generated by $B$ is substituted by his
identity). Authentication is based on the mutually
exchanged nonces, and formalized as follows:
\[
\psi_{KSL}= \forall l: B.\ \forall j:A.(b_j = B_l \wedge a_l = A_j \to x_l = na_j \wedge y_j = nb_l)
\]
which reads any pair of properly connected ``partners'' $B_l$ and
$A_j$ ($b_j = B_l \wedge a_l = A_j$) eventually exchange the nonces
$na_j$ and $nb_l$.

\begin{figure}[h]
  \subfigure[2 principals]{
    \centering
    \includegraphics[scale=.4]{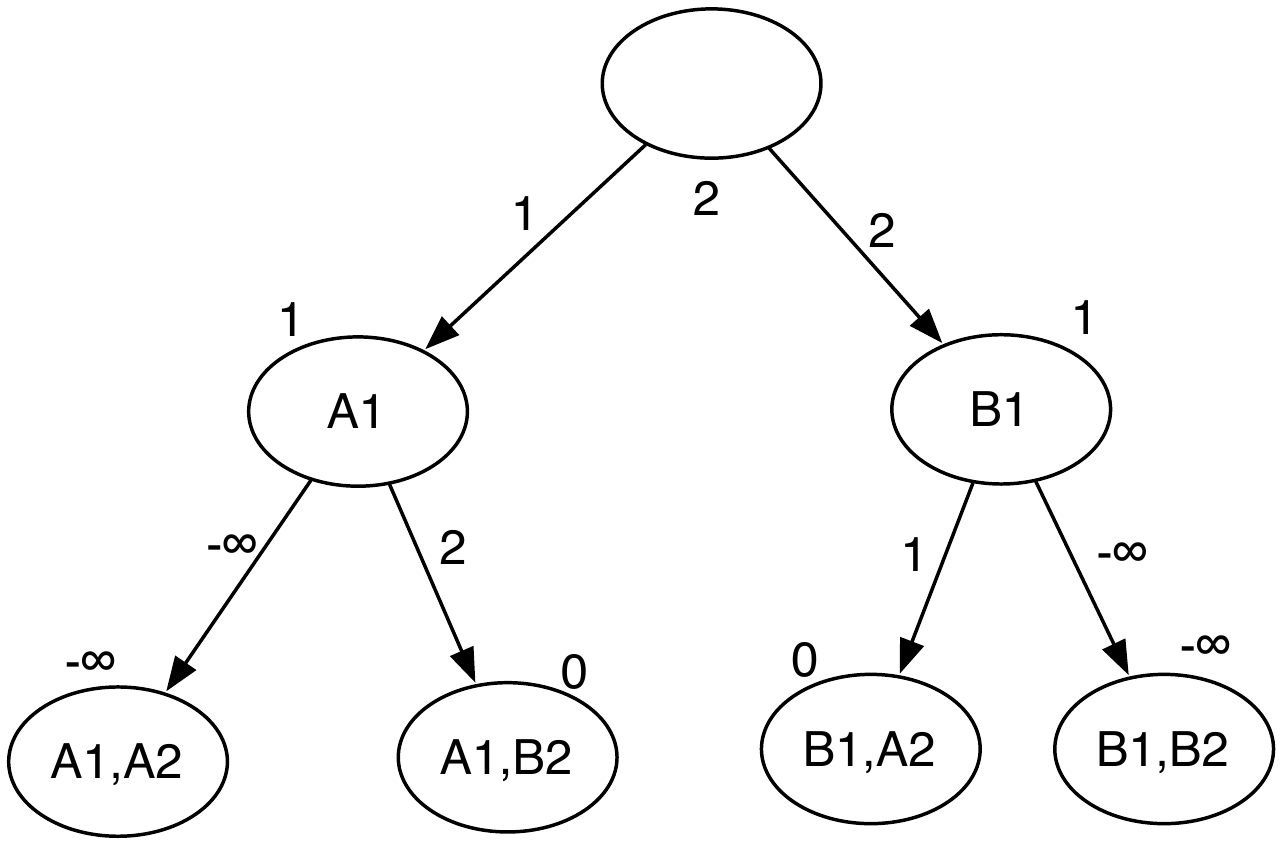}
    \label{ksl2}
  }
  \subfigure[3 principals]{
    \centering
    \includegraphics[scale=.4]{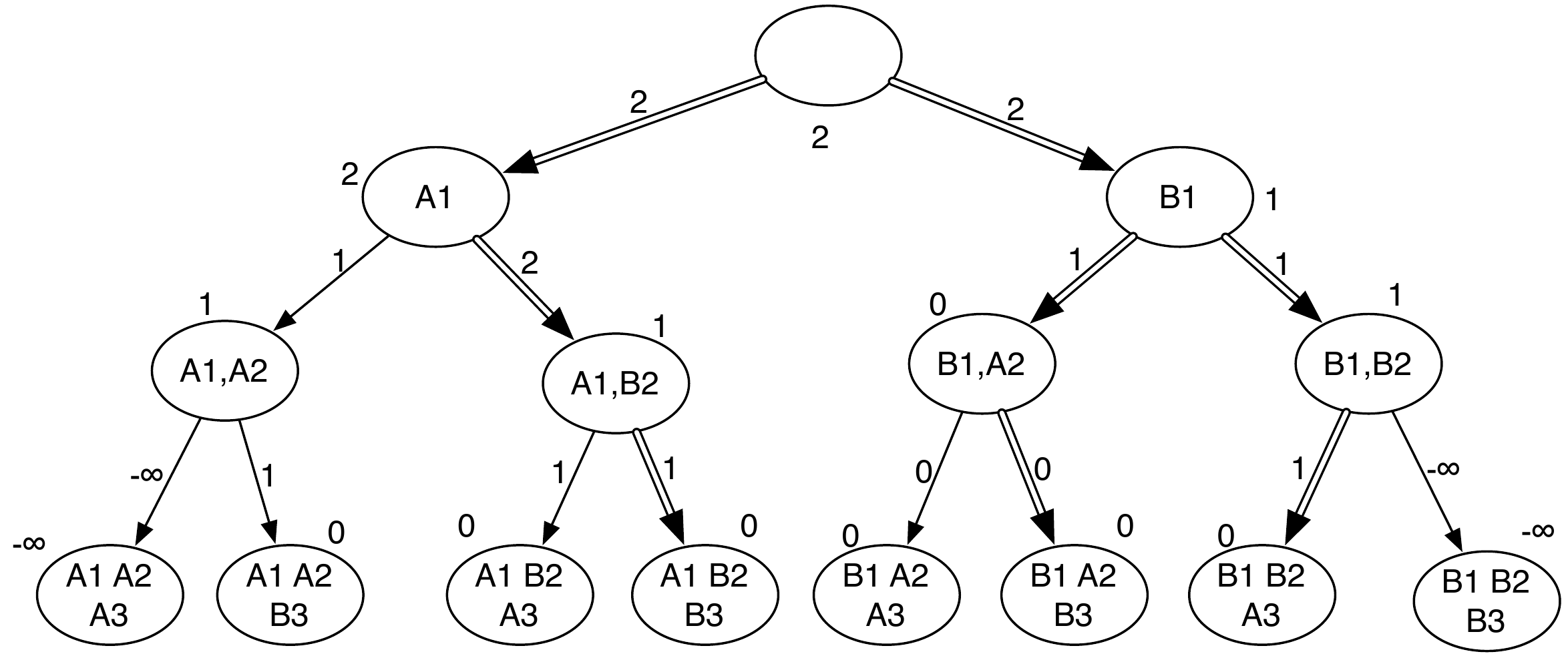}
    \label{ksl}
  }
  \caption{Join transitions of KSL}
\end{figure}
Figures~\ref{ksl2} and~\ref{ksl} depict the weighted join
transitions for 2 and 3 principal instances respectively.
The verification with 2 principal instances reports no attack and
the conclusion can be derived by just exploring half of the state
space (the context $\{A_1,A_2\}$ and $\{B_1,B_2\}$ are labeled
$-\infty$; see Figure~\ref{ksl2}). In case of 3 principal instances
the attacks are found in highlighted paths (those with 'fat' arrow
in Figure~\ref{ksl}). The heuristic assigns appropriate weights to
such paths and 2 states are labeled $-\infty$, suggesting a rough
cut down of 1/4th of the state space.

The examples show that heuristic is able to guide the searching
algorithm towards promising paths containing attacks. Moreover a
considerable part of the state space is pruned, reducing the number
of states to be explored by searching algorithm.

\subsection{Properties of  $\mathcal{H}_{s}$ and $\mathcal{H}_{t}$}\label{properties}
First we would like to briefly comment on the admissibility of our
proposed heuristic.
Admissibility of heuristics is important in certain problems where it
is possible to reach many goal states along different paths each path
having a different cost.
Hence, it may be not only important to find a goal state, but also
find the goal state on the path with the best (or an acceptable)
cost (as discussed in \S~\ref{h:sec} for the $n$-puzzle).
In such cases, it is important for heuristic function to return an
estimation of the cost to reach a goal from the state.

We contend that for security protocols the situation is different.
In fact, the goal state in this case is an ``attack'', namely a state
that violates the security property.
Typically, it is very hard to compare the importance of different
attacks as the violation of a property may be due to many causes
as for the NS example in \S~\ref{ns:sec}).
Therefore, optimality of the attack is of less concern when
validating protocols; what matters in the first instance is to find
an attack, if any. However, we envisage the problem of finding
optimal solutions as important but we do not consider it in this
paper.

It is also important to remark that the weights assigned by
$\mathcal H_s$ and $\mathcal H_t$ to states or transitions do not
correspond to evaluate the proximity to a target state.
Rather they estimate the likeliness for the state to lead to an
attack. This leads to a different scenario where the heuristic
function does not have to return the cost to reach at goal node.
Rather our heuristic returns a value that corresponds to the chance
that nodes and transitions are on a path leading to an attack.
We therefore contend that admissibility is not an issue in our
case.


The following theorem proves the correctness of our heuristic; namely,
it shows that pruned parts of the state space do not contain any
attack.
\begin{thm}
If $\mathcal{H}_{s}(s,\phi)=
-\infty$ then for any state s' =
$\langle\mathcal{C'},\chi',\kappa' \rangle$ reachable from
s= $\langle\mathcal{C},\chi,\kappa \rangle,
\kappa'\models_{\chi'} \phi$
\end{thm}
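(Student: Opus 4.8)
The plan is to exploit the fact that $\mathcal{H}_s$ can return $-\infty$ in exactly one way, extract the conditions that trigger it, and then show those conditions persist along every outgoing path so that $\phi$ remains vacuously satisfied. Concretely, the argument reduces the correctness claim to a single invariant about $\kappa'$ and the semantics of universal quantification in \pl.

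First I would read off Definition~\ref{hc}: the value $-\infty$ arises only from its second case, so $\mathcal{H}_s(s,\phi) = -\infty$ forces $\phi$ to have the shape $\forall i:A.\ \phi'$, forces $s = \langle\mathcal{C},\chi,\kappa\rangle$ to have no departing join transitions ($s\Upsilon = \emptyset$), and forces $\kappa \cap [A] = \emptyset$. Intuitively this records that no instance of $A$ has joined the run and, because no joins remain, none ever will.

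The core step is to prove the invariant $\kappa' \cap [A] = \emptyset$ for every $s' = \langle\mathcal{C}',\chi',\kappa'\rangle$ reachable from $s$, which I would do by induction on the number of transitions from $s$ to $s'$. The base case is immediate from $\kappa \cap [A] = \emptyset$. For the inductive step, consider a transition $s' \to s''$. Using the join-anticipation property recalled earlier (Observation 10.1.3 in~\cite{tuo02}) together with $s\Upsilon = \emptyset$, no join transition is available from $s$ or from any state reachable from it: the join sub-graph forms a tree and $s$ already lies at or beyond its frontier. Hence $s' \to s''$ is an \emph{out} or \emph{in} transition, and since no instance of $A$ is ever present in the context, no identity in $[A]$ can be emitted or derived, so $\kappa'' \cap [A] = \emptyset$. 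This closes the induction. Finally I would invoke the \pl\ semantics of~\cite{bft06}: the quantifier $\forall i:A$ ranges over the instances of $A$ that have joined the run, whose identities therefore populate $\kappa'$. Because $\kappa' \cap [A] = \emptyset$, this domain is empty, so $\forall i:A.\ \phi'$ holds vacuously, giving $\kappa' \models_{\chi'} \phi$ for every reachable $s'$.

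The hard part will be the inductive step, specifically arguing that no \emph{out} or \emph{in} transition can smuggle a fresh instance of $A$ into $\kappa$. This rests on two facts drawn from the cIP semantics that are only stated informally in this excerpt: that the sole mechanism for adding a new instance to the context is a \emph{join} (so the set of participating instances is frozen once joins are exhausted), and that the intruder cannot fabricate the identity of an instance that never joined (so $A_i \notin \kappa$ is preserved). Both are intuitively clear from the description of the state space, but they are precisely the points where a fully rigorous proof must lean on the machinery of~\cite{bft06,tuo02}.
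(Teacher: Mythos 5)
Your proof is correct and follows essentially the same route as the paper's own (sketch) proof: both reduce the claim to the three conditions forced by the $-\infty$ case of $\mathcal{H}_s$ (namely $\phi \equiv \forall i:A.\ \phi'$, $s\Upsilon = \emptyset$, $s \cap [A] = \emptyset$), establish that no state reachable from $s$ can contain an instance of $A$ in its knowledge, and conclude via the vacuous satisfaction of the universal quantifier under rule $(\forall)$. The only cosmetic difference is that the paper argues by contradiction and asserts the invariant $s' \cap [A] = \emptyset$ in one line, whereas you prove it directly by induction on path length and explicitly flag its reliance on the informally stated cIP semantics --- a useful elaboration, but the same argument.
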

\begin{proof}(Sketch)
  Suppose that there is a $s'$ reachable from $s$ such that
  $\kappa'\not\models_{\chi'}\phi$. Then $\kappa'\models_{\chi'} \neg
  \phi$ and by Definition of $\models$ (the relation $\models$ is
  reported in Appendix~\ref{model:app}), there is $A_n \in \kappa'$(
  because $\phi\equiv \forall i: A. \ \phi'$). However by hypothesis
  $s\cap[A]= \emptyset$ and $s\Upsilon= \emptyset$ hence
  $s'\cap[A]=\emptyset$ and therefore $s'$ does not satisfy $\neg
  \phi$.
\end{proof}

\section{Concluding Remarks}\label{rw}
We have designed a heuristic that can be applied to improve the MC of
cryptographic protocols.
The proposed heuristic can drive the searching algorithm towards
states containing attacks with respect to a security formula.
Our heuristic may possibly prune parts of the state space that do not
contain attack.
We have shown that the heuristic is correct, namely we showed that
pruned parts of the state space do not contain attacks.

The formal context to define the heuristic is the one proposed
in~\cite{bft06} which features the cIP calculus and an ad-hoc logical
formalism, called \pl, to respectively express protocols and security
properties.
An original aspect of \pl\ is that it can quantify over principal
instances.
Formulas of \pl\ are checked against the (symbolic) semantics of cIP
by a tool called \tool\ (Automatic Security Protocol Analysis via a
SYmbolic model checking Approach) \cite{k4}.

\subsection{Related work}
At the best of our knowledge, the use of heuristics to analyse
cryptographic protocols has not been much studied.

The concept of heuristics in cryptographic protocol verification has
been utilized in~\cite{k13}.  The idea is to construct a
pattern\footnote{E.g. a pattern can be a representative of all traces
  violating secrecy}, $pt=(E,\rightarrow)$, where $E$ is a set of events
and $\rightarrow$ is a relation on the events.  Afterwards, it is
checked if a $pt$ can give realizable patterns which are actual traces
of the protocol and represent an attack.  For each event execution,
there are certain terms that need to be in intruder's knowledge or
that are added to intruder's knowledge represented by $in(e)$ and
$out(e)$ respectively.  A process called pattern refinement is applied
to get realizable patterns for those events whose $in(e)$ requirements
are not satisfied.  An open goal represents such requirements and is
selected from set of potential open goals on the basis of the
heuristics.  In~\cite{k13}, 5 heuristics have been reported (e.g., an
open goal is selected randomly, open goals that require a decryption
key have higher priority).  However, the whole state space must be
searched if there are no attacks.  We argue that our approach can give
better results as it can prune certain parts of state space even when
there are no attacks (as seen in \S~\ref{ksl:sec}).

In~\cite{k10}, heuristics have been used to minimize the branching
factor for infinite state MC of security protocols.
Mainly, the heuristics in~\cite{k10} reorder the nodes, for instance
actions involving intruder are rated higher than actions initiated by
honest participants.
However these heuristics are very basic and as noted in~\cite{k11},
the tool does not scale to most of the protocols.

Though heuristic methods have not received much attention for MC security protocols, they have been studied for MC in general. 

In ~\cite{k3}, a heuristic has been defined in terms of model and
formula to be verified, that can also prune the state space.
Our heuristic seems to fall under the general conditions considered
in~\cite{k3} and we plan a deeper comparison.

In~\cite{k1}, the heuristic namely 'NEXT' compresses a sequence of
transitions into a single meta transition.
This eliminates transient states and therefore searching algorithm
does less work to find the goal node.
Similarly, in~\cite{k2} heuristics for safety and liveness for
communication protocols are given. At the best of our knowledge the
heuristics in \cite{k2} (and references therein) allow to cut the
state space only in few trivial cases.

\subsection{Future work}
This paper proposes the first step of a research program that may develop
in several directions.

First, other heuristics can be designed and studied; in fact, we are
planning to define two heuristics.
The former exploits the intruder's knowledge $\kappa$ and the cIP
protocol specification while the other heuristic exploits
\emph{joining formulae}, another feature (also supported by \tool) of
cIP.
Joining formulae are \pl\ formulae which enable the analyst to express conditions
on how principals should be joined (by predicating over open
variables)\footnote{
For example, runs of the NS protocol with at least an initiator and a
responder can be specified by the formula $(\exists i:A. true) \wedge
(\exists j:B. true)$ which rules out states that contain only instances
of $A$ or of $B$.
}.

The first heuristic will rank states considering the actions that principal
instances are ready to execute with respect to the formula to falsify.
For instance, if the goal is to prove that a variable should not be assigned
a given value, the heuristic may rank higher those states that assigns such
variable.

The second heuristic may instead be used to avoid the anticipation of
all the joining formulae at the beginning (which may be
computationally expensive) and use them to decide which instance to
introduce in a given state.

It will also be rather interesting to study the combined effect of
those heuristics (e.g., to consider their sum, or the max, etc.) or
also use multiple heuristics during the search depending on the
structure of the state. For instance, in one state one heuristic
might be more suitable than others. Further we intend to implement
these heuristics into existing tool in order to determine the
efficiency achieved in terms of space and time.

We also plan to consider heuristics in other verification contexts.
For instance, using \emph{strand spaces}~\cite{thg99}, the approaches
in~\cite{ath01,sri01} express properties in terms of connections
between strands.
A strand can be parameterised with variables and a trace is generated
by finding a substitution for which an interaction graph exists.
These approaches provide devices very similar to the join mechanism of
cIP and possibly be suitable for heuristics similar to ours to help in
finding the solution of the constraints.
Also, in~\cite{bb05} a symbolic semantics based on unification has
been adopted to verify security protocols with correspondence assertions
and the use of trace analysis.
We think that also in this case heuristics may drive the search for an
attack in a more efficient way.

\bibliographystyle{acm}
\bibliography{stringdef,emi,all,biblio}

\appendix
\appendix
\section{Model for \pl\ formulae}\label{model:app}
We borrow from~\cite{bft06} the definition of models for \pl.
\begin{defn}[Model for \pl\ formulae\label{definition.models}]
\index{$\kappa \models_\chi \phi$} Let $\chi$ be a mapping from
indexed variables to indexed messages, $\kappa$ a knowledge and $\phi$
a closed formula of \pl.
Then $\conf{\kappa,\chi}$ {\em is a model for $\phi$\/} if
$\kappa \models_\chi \phi$ can be proved by the
following rules (where $n$ stands for an instance index):

{\small
$$
\begin{array}{ccc}
\prooftree
   xa_n\chi = m\chi
\justifies
\kappa  \models_\chi xa_n = m
\using
(=)
\endprooftree
&\
&
\prooftree
 \kappa \rhd m\chi
\justifies
\kappa \models_\chi  \kappa \rhd m
\using
(\rhd)
\endprooftree
\end{array}
$$}
{\small
$$
\begin{array}{c}
\prooftree
   \mbox{ exists } n\mbox{ s.t. }  A_n \in \kappa
   \quad
   \kappa \models_\chi \phi [n/i]
\justifies
\kappa \models_\chi \exists i:A.\ \phi
\using (\exists)
\endprooftree
\\
\
\\
\prooftree
   \mbox{ forall } n\mbox{ s.t. }  A_n \in \kappa
   \quad
   \kappa \models_\chi \phi [n/i]
\justifies
\kappa \models_\chi \forall i:A.\ \phi
\using (\forall)
\endprooftree
\end{array}
$$}
{\small
$$
\begin{array}{ccccccc}
\prooftree
\kappa \models_\chi \phi \hspace{.3cm} \kappa \models_\chi \psi
\justifies
\kappa \models_\chi \phi \land \psi
\using (\land)
\endprooftree
&\
&
\prooftree
\kappa \models_\chi \phi
\justifies
\kappa \models_\chi \phi \lor \psi
\using (\lor1)
\endprooftree
\\
\
\\
\prooftree
\kappa \models_\chi \psi
\justifies
\kappa \models_\chi \phi \lor \psi
\using (\lor2)
\endprooftree
&\
&
\prooftree
\kappa \not\models_\chi \phi
\justifies
\kappa \models_\chi \lnot\phi
\using (\lnot)
\endprooftree
\end{array}
$$}
\end{defn}

\end{document}